\def\nb0{{\mathbf{0}}}
\def\nb1{{\mathbf{1}}}
\def\ncalA{{\mathcal{A}}}
\def\ncalB{{\mathcal{B}}}
\def\ncalC{{\mathcal{C}}}
\def\ncalH{{\mathcal{H}}}
\def\ncalO{{\mathcal{O}}}
\def\ncalS{{\mathcal{S}}}
\def\ncalW{{\mathcal{W}}}
\def\nbbN{{\mathbb{N}}}
\def\nbbR{{\mathbb{R}}}
\newtheorem{theorem}{Theorem}
\theoremstyle{plain}
\newtheorem{prop}{Proposition}
\theoremstyle{definition}
\newtheorem{Def}{Definition}
\newtheorem{assum}{Assumption}
\declaretheoremstyle[
  spaceabove=\topsep, spacebelow=\topsep,
  headfont=\normalfont\bfseries,
  notefont=\mdseries, notebraces={(}{)},
  bodyfont=\normalfont,
  postheadspace=1em,
  qed=\qedsymbol
]{mythmstyle}
\declaretheoremstyle[
  spaceabove=\topsep, spacebelow=\topsep,
  headfont=\normalfont\bfseries,
  notefont=\mdseries, notebraces={(}{)},
  bodyfont=\normalfont,
  postheadspace=1em,
  qed=$\diamond$
]{mythmstyle}
\declaretheorem[style=mythmstyle]{remark}
\DeclareMathOperator{\psaa}{\pi_{\text{SAA}}}
\DeclareMathOperator{\pts}{\pi_{\text{TS}}}
\begin{document}

\allowdisplaybreaks

\sloppy

\title{When is Cognitive Radar Beneficial?}

\author{C.E. Thornton and R.M. Buehrer
\thanks{The authors are with Wireless $@$ Virginia Tech, Bradley Department of ECE, Blacksburg, VA, USA, 24061. Correspondence: $thorntonc@vt.edu$.}}

\maketitle
\thispagestyle{plain}
\pagestyle{plain}
\vspace{-1cm}
\begin{abstract}
When should an online reinforcement learning-based frequency agile cognitive radar be expected to outperform a rule-based adaptive waveform selection strategy? We seek insight regarding this question by examining a dynamic spectrum access scenario, in which the radar wishes to transmit in the widest unoccupied bandwidth during each pulse repetition interval. Online learning is compared to a fixed rule-based sense-and-avoid strategy. We show that given a simple Markov channel model, the problem can be examined analytically for simple cases via \emph{stochastic dominance}. Additionally, we show that for more realistic channel assumptions, learning-based approaches demonstrate greater ability to generalize. However, for short time-horizon problems that are well-specified, we find that machine learning approaches may perform poorly due to the inherent limitation of convergence time. We draw conclusions as to when learning-based approaches are expected to be beneficial and provide guidelines for future study. 
\end{abstract}

\begin{IEEEkeywords}
cognitive radar, machine learning, radar signal processing, radar-communications coexistence
\end{IEEEkeywords}

\section{Introduction}
According to Dr. Smith \emph{et al.} \emph{``In the simplest of terms, cognitive radar could be thought of as a system of hardware and software in which the transmit and receive parameters (such as power, pulse length, pulse repetition frequency (PRF), modulation, frequency and polarization) are selected, in real-time, in response to the observed scene to optimize performance for a given application."} \cite[Introduction]{smith2016experiments}. Accepting this broad definition, one may then wonder under which physical conditions such a system might provide drastic performance improvements as compared to traditional radar. It has been demonstrated over a long period of time, in many works, that adaptive waveform selection and receive processing can provide large performance benefits compared to using a fixed traditional waveform, such as linear frequency modulation (LFM) \cite{sira2009waveform,horne2020fast,mitchell2018cost}. Further, adaptive processing has been shown to provide benefits when the scene is dynamically varying \cite{metcalf2015machine}.

Clearly, the specific mechanism used to guide the adaptive parameter selection is extremely important for cognitive radar \cite{mitchell2018cost,bell2015cognitive,smith2013coupled}. For example, one may apply a pre-conceived fixed decision rule, which allows the radar to adapt to its scene dynamically, in some sense meeting Smith's definition above. If the fixed decision rule is chosen appropriately, perhaps by using incorporating \emph{a priori} information about the radar's scene, this approach can be expected to perform quite well. This fixed decision approach differs from \emph{learning-based} systems, which aim to make fewer \emph{a priori} assumptions about the radar scene's physical properties and arrive at a decision rule through repeated experience.

Thus, a more difficult, yet relevant question to answer is whether an online \emph{utility maximizing} cognitive radar can outperform a fixed adaptive strategy, and if so, under what conditions. This is the question we pursue in the present work, and is relevant given the widespread expectation that cognitive radar systems engage in learning over time \cite[Section I]{haykin2012cognitive}. Several recent contributions have examined learning-based approaches in the context of cognitive radar \cite{smith2020neural,charlish2020implementing,thornton2021constrained,thornton2022universal,thornton2022online}. However, these works generally consider a fixed objective function, and examine the radar's performance under scenarios in which the learning approach is definitively beneficial. Additionally, the reliability of learning algorithms is an important issue which has been heavily scrutinized in recent years \cite{martin2020stochastically,chan2019measuring}. In applications such as radar, outage events are meant to be avoided at all costs, as missed detections and lost tracks are very costly. Thus, it is important to understand when learning-based approaches perform worse than simple alternatives, which may have more predictable points of failure.

Herein, we develop initial insight towards the general problem of determining the value of learning in cognitive radar. We proceed by considering a dynamic spectrum sharing scenario in which an adaptive radar system must coexist with a non-cooperative communications system, where the interfering system transmits according to a fixed Markov transition kernel. This scenario is inherently time-varying, and thus adaptive radar definitively provides better performance than a fixed-channel approach \cite{martone2021closing}. As a baseline adaptive strategy, we consider a sense-and-avoid policy, which attempts to avoid interference based on the radar's most recent observation of the interference channel. However, the fixed nature of the sense-and-avoid approach is limited in that it does not adapt its decision rule based on the particular dynamics of the interference. We see that a learning-based approach, which attempts to balance a trade-off between missed opportunities and collisions provides substantial benefits over SAA, especially when the interference behavior is time-varying and close to deterministic.

\emph{Contributions and outline:} In Section \ref{se:syst} we develop machinery to examine the performance of decision rules in Markov interference channels. In Section \ref{se:analysis} we show that any fixed decision rule in a Markov interference channel can be decomposed, and performance explicitly analyzed over a long time horizon. We use notions of stochastic dominance to show that reasonably-constructed ML approaches weakly dominate sense-and-avoid asymptotically. Section \ref{se:numer} provides numerical justification in the finite-horizon regime, while Section \ref{se:concl} provides concluding remarks.

\emph{Implications for system design:} Stochastic control approaches are generally less predictable than applying an adaptive strategy based on a fixed decision rule. However, reasonably constructed learning algorithms will generally perform at least as well as rule-based strategies in the asymptotic regime. System designers should consider observation probability, number of actions, and time horizon when deciding whether a ML approach is beneficial relative to a rule-based scheme. Since there exist conditions where one approach is more effective, we postulate that robust systems may choose to incorporate a mixture of learning-based and fixed decision rules\footnote{Such fixed decision rules could be learned ``offline" by a learning algorithm trained in a general setting.} in a hierarchical `metacognitive` framework \cite{mishra2020toward,martone2020metacognition}. 

\section{System Model}
\label{se:syst}
We consider a finite-horizon, discrete time, dynamic spectrum sharing scenario, where time is indexed by $t = 1,2,...,n$. A tracking radar must share a fixed channel with one or more communications systems. Each time step corresponds to a radar pulse repetition interval (PRI). It is assumed that the radar must transmit every PRI. During each PRI, the radar has access to an observation of the interference in a fixed channel of bandwidth $B$. The shared channel is divided into $N_{SB}$ equally-spaced sub-channels. The radar may transmit in any \emph{contiguous} grouping of sub-channels.

Let the interference state be represented by a $d$-element binary vector expressed by $s_{t} \in \ncalS \subseteq \{0,1\}^{d}$. The frequency content of the radar's waveform is similarly given by the binary vector $w_{t} \in \ncalW \{0,1\}^{d}$. Since the radar is limited to transmitting in contiguous sub-channels, the total number of allowable waveforms is $|\ncalW| = d(d+1)/2$. The radar's observation is an estimate of the current interference state $o_{t} \in \{0,1\}^{d}$. The interference channel behavior is completely specified by initial state $s_{0}$ and transition probability matrix $P$, with elements $p_{ij} = \mathbb{P}(s_{t} = j |s_{t-1} = i) \in [0,1]$. We note that the transition probability matrix, and therefore interference channel behavior, are unknown to the radar \emph{a priori}. Based on the underlying interference state and the choice of radar waveform, the radar receives a real-valued loss at each PRI, denoted by $\ell_{t}$. The exact specification of the loss mapping will be developed below.

The baseline rule-driven adaptive approach considered in this paper is the sense-and-avoid (SAA) policy, a form of which is implemented using low-cost hardware in \cite{Kirk2019,Kirk2020} and is defined here as follows:
\begin{Def}[Sense-and-Avoid Policy]
	Let $\pi_{\text{SAA}}: \ncalO \mapsto \ncalW$ be a fixed decision function which selects the widest contiguous bandwidth available assuming $o_{t} = o_{t-1}$. The decision function is implemented algorithmically as follows. 
	\begin{enumerate}
		\item The position of the longest run of zeros in vector $o_{t-1}$ is found.
		\item Waveform $w_{k}$ is selected such that the radar transmits in the widest available bandwidth.
		\item If there are multiple vacancies of the same length, ties are broken randomly.
	\end{enumerate}	 
\end{Def}

A variety of learning-based algorithms can be employed for online waveform selection. For comparison's sake, we focus predominantly on the Thompson Sampling (TS) approach, which is useful due to the simplicity of implementation, impressive empirical performance, and theoretical performance guarantees. The TS policy involves picking the waveform which has the highest posterior probability of yielding the lowest loss, and was introduced in \cite{thornton2021constrained}. 

\begin{Def}[Thompson Sampling Policy]
	The Thompson Sampling (TS) policy selects the waveform $w^{*}$ which has the highest posterior probability of yielding the lowest loss. More precisely, let $\ncalH$ be the set of waveforms, losses, and observations seen by the radar until PRI $t$, and let $\theta \in \Theta$ be a parameter employed by the radar to account for stochastic effects of the channel. The TS policy then proceeds as follows,
	\begin{multline}
	\pi_{\text{TS}} = \\ \int_{\Theta} \mathbbm{1} \left[\mathbb{E}\left[\ell \mid w^*, o, \theta\right]=\min _{w^{\prime}} \mathbb{E}\left[\ell \mid w^{\prime}, o, \theta\right]\right] P(\theta \mid \mathcal{H}) d \theta
\end{multline}
\end{Def}

\begin{assum}[Markov Interference Channel]
	We assume the interference state generating process can be represented as a stochastic matrix $P: \ncalS \times \ncalS \mapsto [0,1]$ composed of transition probabilities associated with specific state transitions. For this argument, we assume the radar's choice of waveform does not impact the state transition behavior, however the model can be generalized such that $P(s,a,s'): \ncalS \times \ncalA \times \ncalS \mapsto [0,1]$, which is the usual model for a Markov Decision Process (MDP). The model can also be further generalized to incorporate any $K^{\text{th}}$ order Markov process, as in \cite{thornton2022universal}.
	\label{assum:Markov}
\end{assum}

\begin{Def}[Collisions]
	Let $C_{t}$ correspond to a `collision' event, in which $w_{t}$ and $s_{t}$ contain energy in overlapping frequency bands during PRI $t$. More precisely, 
	\begin{equation}
		C_{t} = 
		\begin{cases}
			1, & \text{if } \sum_{i=1}^{d}\mathbbm{1}\{w_{t}[i] = s_{t}[i] \} > 0 \\
			0, & \text{if } \sum_{i=1}^{d}\mathbbm{1}\{w_{t}[i] = s_{t}[i] \} = 0.
		\end{cases}
	\end{equation}
	Further, we define $N_{c,t} = \sum_{i=1}^{d}\mathbbm{1}\{w_{t}[i] = s_{t}[i] \}$ to be the number of colliding sub-bands at PRI $t$.
\end{Def}

\begin{Def}[Missed Opportunities]
	Let $M_{t}$ correspond to a `missed opportunity' event, which occurs whenever the radar does not make use of available sub-channels. More precisely, this event is defined by
	\begin{equation}
		M_{t} = 
		\begin{cases}
			1, & \text{if } |w^{*}| - |w_{t}| > 0 \\
			0, & \text{otherwise} 
		\end{cases}
	\end{equation}
	where $w^{*}$ is the widest available grouping of sub-channels in the current interference state vector $s_{t}$, and $w_{t}$ is a binary vector which identifies the frequency bands utilized by the waveform actually transmitted by the radar at PRI $t$. Further, we define $N^{mo}_{t} = \max{\{|w^{*}| - |w_{t}|,0\}}$ to be the number of available sub-channels missed at PRI $t$.
\end{Def}

We now define the loss function for the learning algorithm $\ell_{t}: \ncalS \times \ncalW \mapsto [0,1]$, which is specified by
\begin{equation}
	\ell_{t} = \left\{
	\begin{array}{lr}
		1, & \text{if } N^{c}_{t} > 0\\
		\eta N^{mo}_{t}, & \text{if } N^{c}_{t} = 0
	\end{array}
	\right\}, 
	\label{eq:loss}
\end{equation}
where $\eta \in [0,1/N]$ is a parameter which controls the radar's preference for bandwidth utilization compared to SIR. This loss function is justified as follows. Collisions are likely to cause a missed detection, numerous false alarms, or performance degradation to neighboring systems. Thus, collisions are to be avoided at all costs. Given that collisions can be effectively avoided, the radar wishes to utilize as much available bandwidth as possible to improve resolution performance and ensure efficient utilization of the available spectrum. 

We note that the loss function expressed in (\ref{eq:loss}) is not necessarily unique. However, this formulation balances the trade-off between the two fundamental errors for this application in a manner that expresses the importance of each. In the numerical results, we note that the value of $\eta$ primarily impacts the radar's behavior at inflection points. For example, when it is unclear whether the radar should attempt to utilize a particular band or not, the value of $\eta$ dictates the radar's level of `risk aversion', effectively speaking. In Section \ref{se:numer}, we examine the impact of $\eta$ on radar behavior more precisely.

\section{Analysis}
In this section, we utilize the problem formulation developed in Section \ref{se:syst} to determine under what conditions, and in what sense, the TS policy provides performance benefits over the SAA policy, as well as the converse.
\label{se:analysis}
\begin{remark}
	Under Assumption \ref{assum:Markov} and a fixed time horizon $n$, the cumulative number of collisions and mixed opportunities experienced by the radar under any policy are random variables.
\end{remark}

Since the loss function specified in (\ref{eq:loss}) is a linear combination of missed opportunities and collisions, we may analyze the performance of any decision-making strategy by introducing a \emph{partial ordering} on the random variables which describe these performance statistics. In the development below introduces several of such orderings are introduced.

From the definition of missed opportunities and collisions, we note that for a fixed decision rule such as SAA, both events can be associated with specific state transition events. Given that the SAA policy is a fixed decision rule, the frequency of these events can be identified by the following decomposition of $P$:

\begin{prop}[Structure of Transition Matrix]
	Define three sets $\ncalA$, $\ncalB$, and $\ncalC$. Set $\ncalA$ consists of transitions $t_{ij}$ such that $|\pi_{\text{SAA}}(s_{i})| < |\psaa(s_{j})|$ and such that $|\psaa(s_{i})| = |\psaa(s_{j})|$ but $\psaa(s_{i}) \neq \psaa(s_j)$. In other words, the set of transitions such that acting on the SAA policy will produce a collision. Set $\ncalB$ consists of the transitions $t_{ij}$ such that $|\psaa(s_{i})| > |\psaa(s_{j})|$. Finally, set $\ncalC$ consists of $t_{ij}$ such that $\pi_{\text{SAA}}(s_{i}) = \psaa(s_{j})$.
\end{prop}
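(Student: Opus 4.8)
The plan is to prove two things: first, that $\ncalA$, $\ncalB$, and $\ncalC$ are pairwise disjoint and exhaust the set of all transitions $t_{ij}$, so that $P$ admits the entrywise decomposition $P = P_{\ncalA} + P_{\ncalB} + P_{\ncalC}$, where $P_{\ncalX}$ retains the entries $p_{ij}$ with $t_{ij}\in\ncalX$ and is zero elsewhere; and second, that membership of $t_{ij}$ in a given set determines the type of per-PRI event (collision, missed opportunity, or neither) incurred by SAA on that transition. The first part is the clean combinatorial core, and the second supplies the interpretation that will later let us read off the expected collision and missed-opportunity counts from the stationary distribution of $P$.

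For the partition I would fix an arbitrary ordered pair $(s_i,s_j)$ and apply trichotomy to the scalar pair $\big(|\psaa(s_i)|,|\psaa(s_j)|\big)$: exactly one of $|\psaa(s_i)|<|\psaa(s_j)|$, $|\psaa(s_i)|=|\psaa(s_j)|$, or $|\psaa(s_i)|>|\psaa(s_j)|$ holds. The strict-greater case is $\ncalB$; the strict-less case is the first branch of $\ncalA$; and in the equality case I split on whether the vector-valued waveforms agree, yielding the second branch of $\ncalA$ when $\psaa(s_i)\neq\psaa(s_j)$ and $\ncalC$ when $\psaa(s_i)=\psaa(s_j)$. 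The two branches of $\ncalA$ lie in different trichotomy cases, so they do not overlap, and the equality case is split cleanly by $\ncalC$; hence every $t_{ij}$ lies in exactly one set, which is the partition claim.

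The engine for the event-type interpretation is a containment lemma: under perfect sensing the radar transmits $w_t=\psaa(s_i)$, a single contiguous block, and $C_t=0$ holds iff that block lies entirely inside a run of zeros of $s_j$; consequently, if $|\psaa(s_i)|$ exceeds the longest zero-run of $s_j$, the block cannot be placed vacantly and a collision is forced. I would apply this set by set. On $\ncalB$ the transmitted width strictly exceeds $|\psaa(s_j)|$, which equals the longest zero-run of $s_j$, so a collision is forced. On $\ncalC$ the transmitted block is literally $\psaa(s_j)$, which sits in a zero-run of $s_j$ by definition, so neither event occurs. On $\ncalA$ one compares the transmitted width against $|w^*|=|\psaa(s_j)|$ to detect a missed opportunity through $N^{mo}_{t}=\max\{|w^*|-|w_t|,0\}$.

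The hard part will be making the correspondence on $\ncalA$ exact, because whether the transmitted block actually overlaps interference in $s_j$ depends on its \emph{position}, not only its width: a transition with $|\psaa(s_i)|<|\psaa(s_j)|$ can yield either a collision or a clean missed opportunity depending on where the old vacancy sits relative to the new occupancy, and the random tie-breaking in the SAA definition makes $C_t$ a genuine random variable whenever $s_j$ has several maximal vacancies. I would handle this either by refining the sets to be placement-aware, indexing transitions by the actual band positions of $\psaa(s_i)$ rather than by widths alone, or by passing to expectations and showing that the decomposition correctly identifies, for each transition class, the conditional distribution of the induced event, so that the expected collision and missed-opportunity counts are still recovered by summing the corresponding transition probabilities against the stationary law of $P$. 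Resolving this placement dependence cleanly is the crux; the partition and the forced-collision lemma are otherwise routine.
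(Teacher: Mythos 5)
The paper offers no proof of this proposition at all: it is stated as a definition plus an interpretive gloss, and the only proof in the paper is the appendix argument for Theorem 1. So there is no argument of the paper's to compare yours against; your proposal supplies the formalization the paper skips. Within it, the partition claim is handled correctly (trichotomy on the widths $|\psaa(s_i)|$ versus $|\psaa(s_j)|$, with the equality case split on whether the waveforms coincide), and your containment lemma settles $\ncalB$ and $\ncalC$: a transmitted contiguous block wider than the longest zero-run of $s_j$ must overlap an occupied sub-band, so $\ncalB$-transitions force a collision, while on $\ncalC$ the transmitted block is $\psaa(s_j)$ itself, so neither event occurs. (Note that both of us are reading the paper's collision definition in its intended sense, overlap of occupied bands; taken literally, the indicator $\mathbbm{1}\{w_t[i]=s_t[i]\}$ also counts agreements of zeros, which cannot be what was meant.)

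The difficulty you flag on $\ncalA$ is genuine, and it is a defect of the proposition as stated rather than of your proof. The gloss that $\ncalA$ is ``the set of transitions such that acting on the SAA policy will produce a collision'' is false: take $d=4$, $s_i=(0,0,1,1)$, $s_j=(0,0,0,0)$. Then $|\psaa(s_i)|=2<4=|\psaa(s_j)|$, so $t_{ij}\in\ncalA$, yet transmitting $\psaa(s_i)$ into $s_j$ collides with nothing; the outcome is a missed opportunity. Similarly, the equal-width branch of $\ncalA$ can produce no adverse event at all when $s_j$ has several maximal vacancies and the old block still occupies one of them, which is exactly the tie-breaking randomness you identify. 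In fact the paper's informal labels appear to be swapped: it is $\ncalB$, not $\ncalA$, whose transitions guarantee a collision, while an $\ncalA$-transition yields a missed opportunity precisely when it does not collide. Your two proposed repairs, namely refining the decomposition to track the position (not just the width) of $\psaa(s_i)$, or passing to the conditional distribution of the event given the transition class and integrating against the stationary law of $P$, are the right ways to make the decomposition exact, and one of them would need to be carried out explicitly before this proposition can support a quantitative long-run analysis of SAA's collision and missed-opportunity rates.
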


\begin{remark}[Effectiveness of SAA]
	If $s_{t} = s_{t-1}$ and $o_{t} = o_{t-1} = s_{t}$, then transmitting $w_{t} = \pi_{\text{SAA}}(o_{t-1})$ during each PRI results in zero collisions and zero missed opportunities. Equivalently, the SAA strategy is effective only when the diagonal elements of $P$, $p_{ii}$, are close to one, and when the probability of being in set $\ncalC$ is large.
\end{remark}

\begin{remark}[Effectiveness of Learned Policies]
	Learned policies will be effective under the following conditions: 1. long time horizon $n \rightarrow \infty$, 2. many elements of $P$ close to $0$ or $1$ (nearly deterministic), and when the probability of being in set $\ncalA$ or $\ncalB$ is large.
\end{remark}

Having generally identified under which conditions the SAA and TS policies will be effective, we now move towards a more specific characterization, employing tools from dynamic programming and statistical decision theory.

\begin{Def}[Optimal Long-Term Average Cost]
Given an initial state $s_{0}$, the optimal long-term average cost achievable by a policy is given by
	\begin{align}
		&\lambda^*\left(s_{0}\right) = \nonumber\\
		&\inf _{\pi \in \Pi} \limsup _{n \rightarrow \infty} \mathrm{E}_\nu\left[\frac{1}{n} \sum_{t=1}^n \ell\left(s_t, w_t \right) \mid s_{1}^{t-1}, w_{1}^{t-1}\right],
	\end{align}
where the infimum is taken over the set of all admissible decision rules $\Pi$.
\end{Def}

A policy which achieves the optimal long term average cost is known as the Bellman optimal policy, which is not necessarily unique.

\begin{Def}[Bellman Optimal Policy]
	\begin{align}
		\label{eq:Bellman}
		\pi^{*} = \min _{w_B} \sum_{s_{t+1}} P &\left(s_{t+1} \mid s_{1}^{t}\right) \\ \nonumber
		& \times\left[\ell\left(s_t+1, w_B \right)+\alpha \ell\left(s_1^{t}, w_1^{t}\right)\right],
	\end{align}
	where $\alpha \in [0,1]$ is a weighting factor called the \emph{discount rate}.
\end{Def}

\begin{figure*}
	\centering
	\includegraphics[scale=0.5]{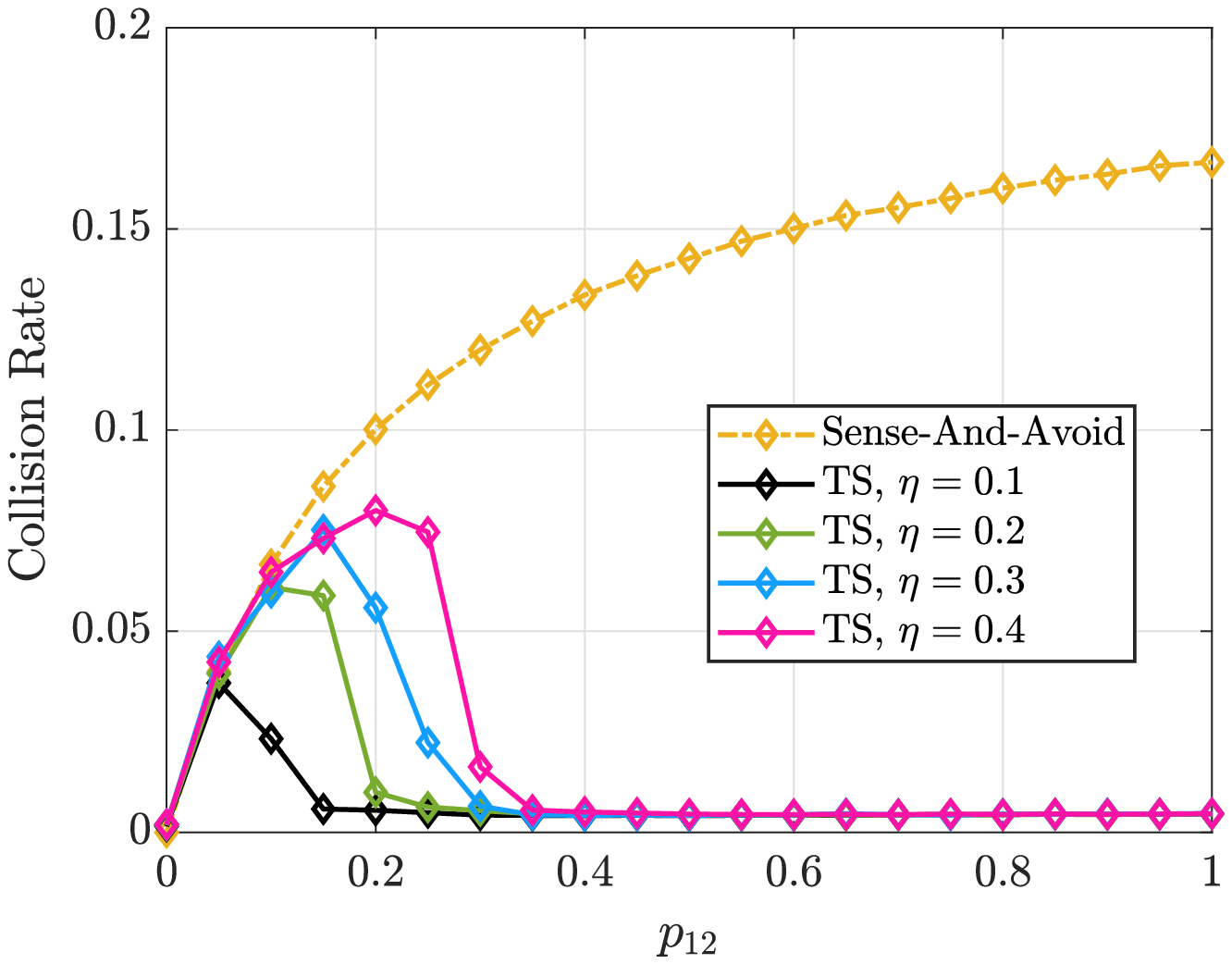}
	\includegraphics[scale=0.5]{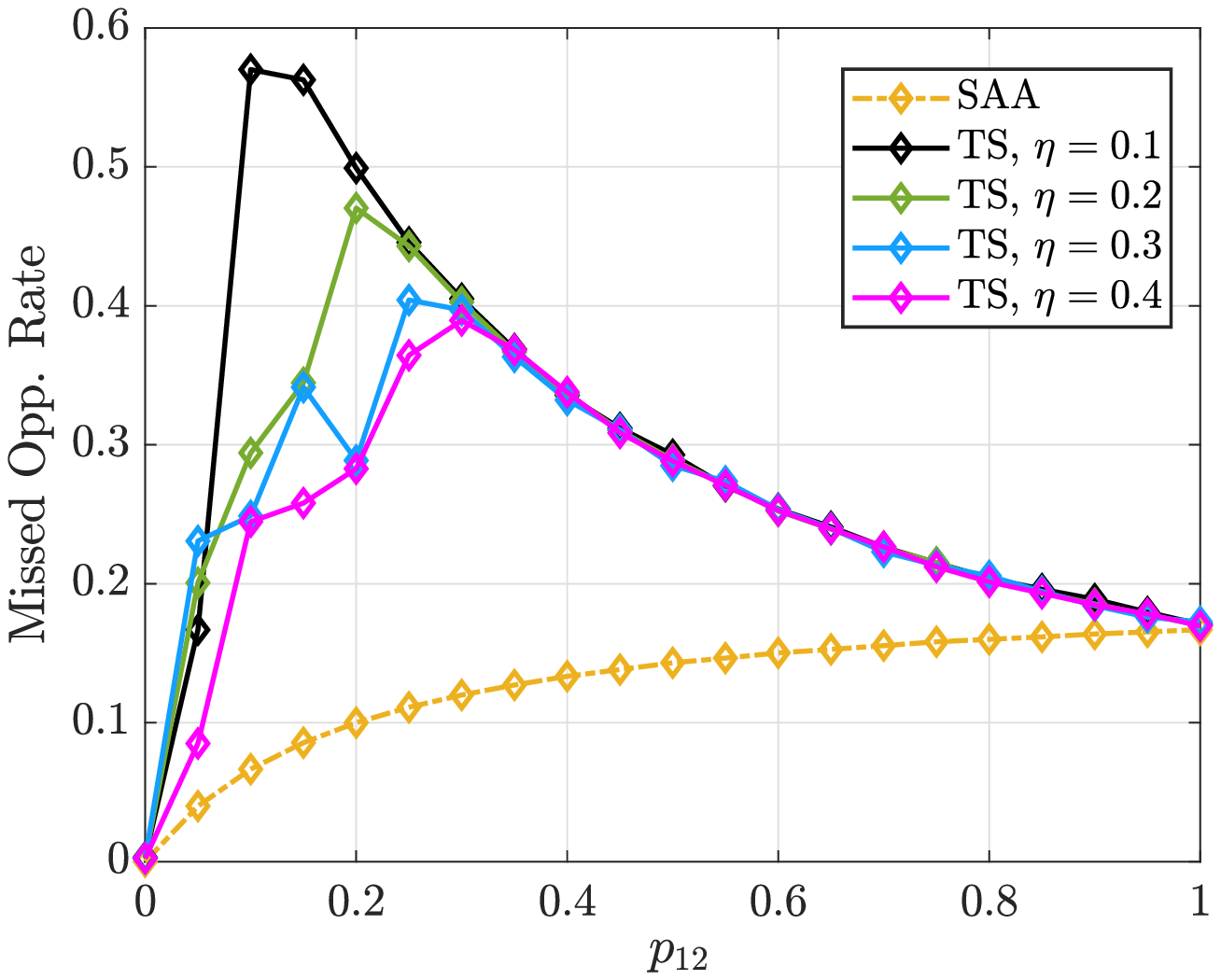}
	\caption{Analysis of the collision and missed opportunity rates as transition probability $p_{12}$ is varied. We observe that as the interference becomes increasingly dynamic ($p_{12} \rightarrow 1$), the learning-based approaches definitively outperform SAA over a time horizon of $n = 1e4$.}
	\label{fig:firstCase}
\end{figure*}

\begin{figure*}
	\centering
	\includegraphics[scale=0.5]{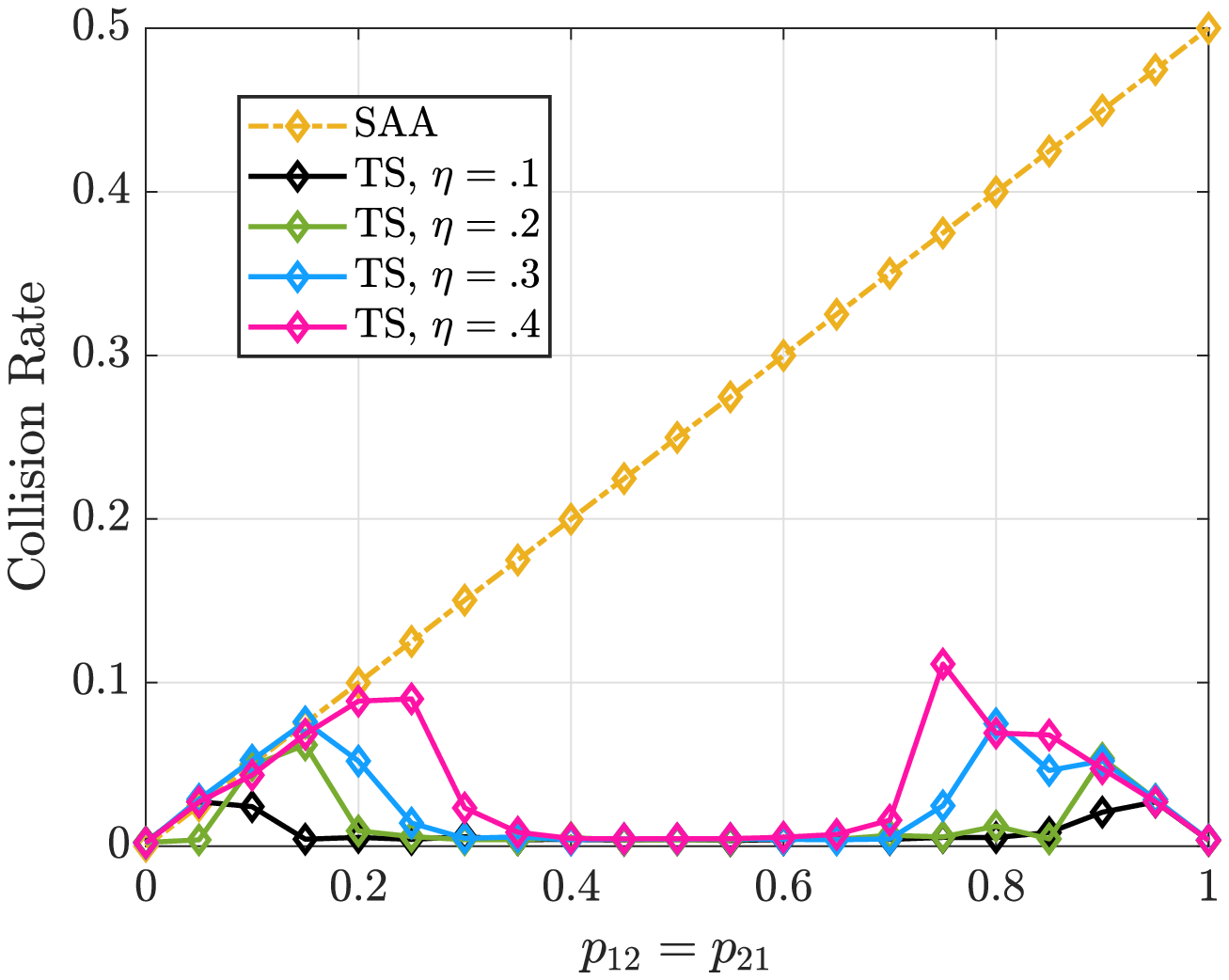}
	\includegraphics*[scale=0.5]{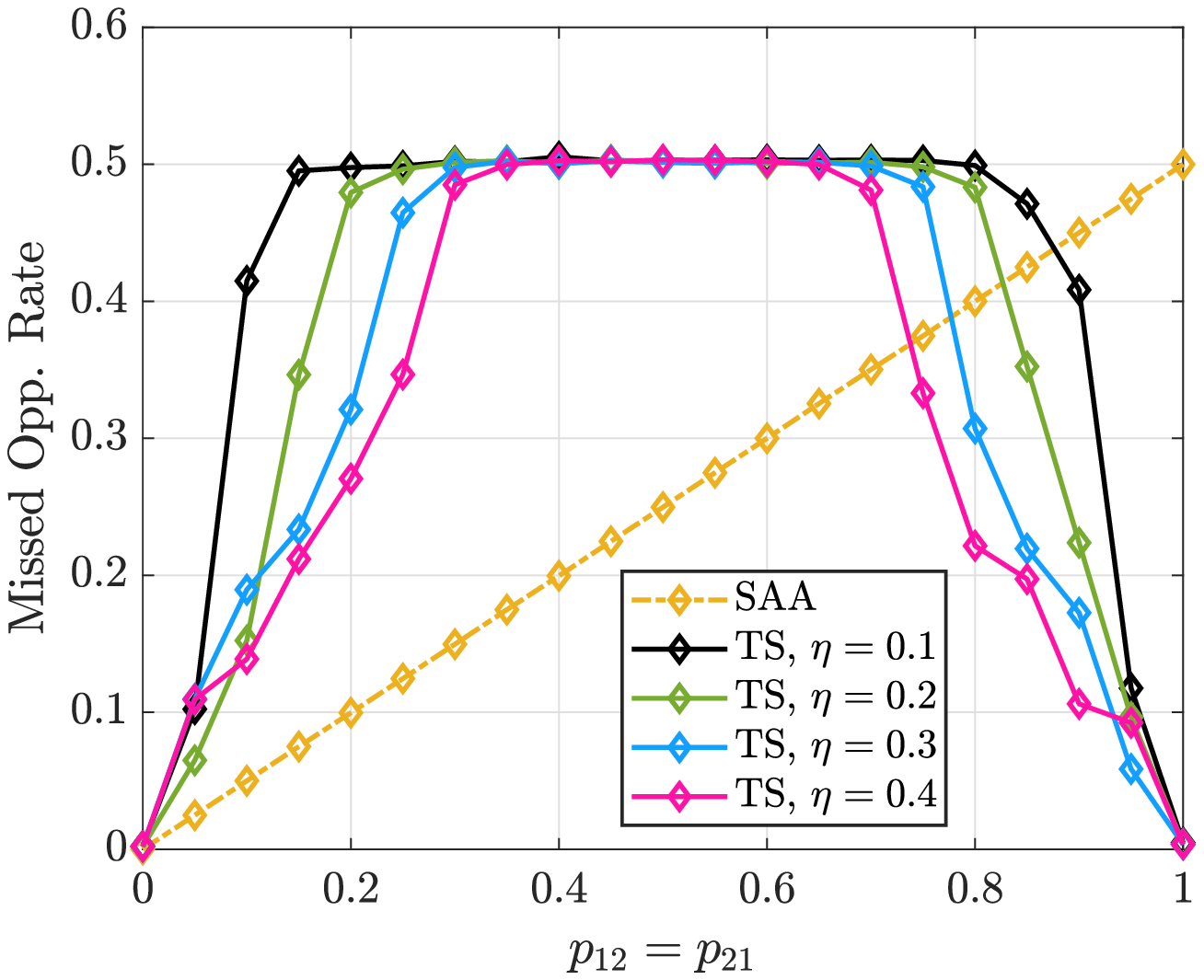}
	\caption{Collision and missed opportunity rates as transition probabilities $p_{12}$ and $p_{21}$ are jointly varied. As the interference becomes increasingly dynamic and determinsitic ($p_{12} = p_{21} \rightarrow 1$), the learning-based approaches definitively exceed SAA in performance.}
	\label{fig:joint}
\end{figure*}

The strongest measure of dominance that one waveform selection policy may have over another is statewise dominance, defined as follows:

\begin{Def}[Statewise Dominance]
	A waveform selection policy $\pi_{1}$ is said to dominate policy $\pi_{2}$ \emph{statewise} if $\ell(\pi_{1}(s)) \leq \ell(\pi_{2}(s))$ for every $s \in \ncalS$.
\end{Def}

\begin{remark}
	Utility maximizing strategies, such as the Bellman optimal policy specified in (\ref{eq:Bellman}), often sacrifice performance in states that occur with low probability, in order to improve performance in states that occur with high probability. Thus, we argue that while statewise dominance is a strong condition, it is not always the most meaningful notion of dominance for cognitive radar problems.  
\end{remark}

\begin{Def}[Long-Term Average Dominance]
	Let $\lambda_{\pi_{1}}$ be the long-term average cost associated with acting according to policy $\pi_{1}$ in some fixed environment $E$, specified by Markov transition kernel $P$. Policy $\pi_{1}$ is said to dominate $\pi_{2}$ on average if $\lambda_{\pi_{1}} < \lambda_{\pi_{2}}$.
\end{Def} 

\begin{Def}[First-Order Dominance]
	Policy $\pi_{1}$ is said to first-order dominate policy $\pi_{2}$ if $P(\ell(\pi_{1}) \geq x) \leq P(\ell(\pi_{2}) \geq x)$ for all $x \in \mathbb{R}$ and $P(\ell(\pi_{1}) \geq x) < P(\ell(\pi_{2}) \geq x)$ for some $x$.
\end{Def}

\begin{Def}[Second-Order Dominance]
	Let $F_{1}$ be the CDF of a performance statistic, namely here the observed loss as specified in (\ref{eq:loss}), under policy $\pi_{1}$ and likewise $F_{2}$ be the cdf under $\pi_{2}$. Then $\pi_{1}$ is said to second-order dominate $\pi_{2}$ if 
	\begin{equation*}
		\int_{-\infty}^x\left[F_1(t)-F_2(t)\right] d t \geq 0 
	\end{equation*}
	for all $x \in \nbbR$ with strict inequality at some $x$.
\end{Def}

\begin{remark}
	Second-order stochastic dominance is a necessary condition for first-order stochastic dominance. Thus, first-order dominance is a stronger condition.
\end{remark}

\begin{theorem}
	For any time-varying Markov interference channel, namely whenever some non-diagonal elements of $P$ are less than one, $\pi_{\text{TS}}$ weakly dominates dominates $\psaa$ in the long-term average sense. Further, as $t \rightarrow \infty$, $\pts$ dominates $\psaa$ in the second-order stochastic sense.
\end{theorem}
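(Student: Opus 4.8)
The plan is to use the Bellman-optimal stationary policy $\pi^{*}$ of (\ref{eq:Bellman}) as a bridge between $\pts$ and $\psaa$, relying on two facts: that the fixed rule $\psaa$ is an admissible policy (so it cannot beat the optimum), and that Thompson Sampling is asymptotically optimal for the unknown-kernel Markov model.

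For the long-term average claim I would first note that $\psaa : \ncalO \mapsto \ncalW$ is a stationary deterministic decision rule, hence $\psaa \in \Pi$, so by the definition of the optimal average cost $\lambda^{*}(s_0) \le \lambda_{\psaa}$. I would then invoke the Bayesian regret guarantees for posterior (Thompson) sampling in finite MDPs: since the radar maintains a posterior over the unknown kernel $P$ and acts greedily on samples from it, its cumulative regret against $\lambda^{*}$ is sublinear, so the time-averaged cost of $\pts$ converges to $\lambda^{*}(s_0)$. Chaining the two gives $\lambda_{\pts} = \lambda^{*}(s_0) \le \lambda_{\psaa}$, which is weak long-term average dominance. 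The time-varying hypothesis enters here only to make the statement non-degenerate: if the chain were frozen, $\psaa$ would already be optimal and the inequality would hold with equality, whereas once transitions off the SAA-consistent set $\ncalC$ occur with positive probability, $\psaa$ errs on the sets $\ncalA$ and $\ncalB$ of the transition-structure decomposition and is strictly suboptimal.

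For the second-order claim I would work with the stationary per-PRI loss laws under $\pi^{*}$ and $\psaa$, whose CDFs I denote $F_{\opt}$ and $F_{\mathrm{SAA}}$; both are supported on $[0,1]$, collisions incurring the maximal loss $1$ and missed opportunities losses $\eta N^{mo}_{t}\in[0,1]$. Writing $G(x) = \int_{-\infty}^{x}\left[F_{\opt}(t) - F_{\mathrm{SAA}}(t)\right]dt$, the boundedness identity $\int_{0}^{1}F_i\,dt = 1 - \E[\ell_i]$ gives $G(1) = \E[\ell_{\mathrm{SAA}}] - \E[\ell_{\opt}] \ge 0$ directly from the average-cost comparison of the first part, while $G(x)=0$ for $x\le 0$. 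It remains to certify $G(x)\ge 0$ for $x\in(0,1)$, equivalently that $\E[u(\ell_{\opt})]\le \E[u(\ell_{\mathrm{SAA}})]$ for every increasing convex $u$. The mechanism is that $\pi^{*}$ transforms the loss law of $\psaa$ by a \emph{downward} redistribution: by transmitting on the predicted widest vacancy it converts the loss-$1$ collisions that $\psaa$ suffers on $\ncalA$-transitions into strictly smaller missed-opportunity losses, shrinking the atom at $1$ without depleting the mass at the perfect outcome $0$. Since increasing convex $u$ weights the collision value most heavily, this redistribution lowers $\E[u(\cdot)]$ for all such $u$, yielding second-order dominance. Finally, because $\pts \to \pi^{*}$ as $t\to\infty$, its per-step loss law converges to $F_{\opt}$ and the inequality passes to $\pts$ in the limit.

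I expect this second-order step to be the crux. SSD is strictly stronger than the mean comparison supplied by the first part: in a toy law where $\pi^{*}$ moves mass from loss $1$ to an intermediate value but also sacrifices some perfect-outcome mass, $G$ becomes negative just above $0$ exactly when $\pi^{*}$ attains loss $0$ less often than $\psaa$, so lower mean does not by itself give dominance. The substance of the argument is therefore to control this lower-tail term and show the collision-atom reduction dominates any sacrificed loss-$0$ mass; this is where the near-deterministic hypothesis (many entries of $P$ close to $0$ or $1$) does the real work, since there $\pi^{*}$ predicts the next state, attains loss $0$ at least as often as $\psaa$, and removes essentially all of its collisions. A secondary technical point is to justify that the \emph{time-averaged} loss law under $\pts$ converges to the \emph{stationary} loss law of $\pi^{*}$ rather than merely matching its average cost, which needs ergodicity of the induced chain and a reading of the dominance for any average-cost-optimal stationary policy, since (\ref{eq:Bellman}) fixes $\pi^{*}$ only up to ties.
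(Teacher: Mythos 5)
Your first part is, modulo packaging, the paper's own argument. The paper's appendix decomposes the regret of $\psaa$ as (number of interference state transitions $R$) times (per-transition sub-optimality gap $\Delta$), so SAA's regret grows linearly in the horizon, and then cites the sublinear $\tilde{\mathcal{O}}(d^{3/2}\sqrt{n})$ regret guarantee of Thompson Sampling; weak long-term-average dominance follows by comparing the two growth rates. Your route through the Bellman optimum --- $\lambda_{\pts}=\lambda^{*}(s_0)\le\lambda_{\psaa}$, with a PSRL-style sublinear-regret result supplying the equality and admissibility of $\psaa$ supplying the inequality --- is the same comparison with the optimal policy made explicit, and is arguably cleaner; your remark that positive probability of leaving the set $\ncalC$ makes $\psaa$ strictly suboptimal is exactly the paper's ``$R>0$, $\Delta>0$'' statement. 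The only substantive difference is which regret bound you cite (posterior sampling for finite MDPs versus the paper's contextual-bandit TS bound), which does not change the structure of the argument.

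The gap is in the second-order claim, and you have correctly diagnosed it but not closed it. The theorem asserts SSD for \emph{any} time-varying channel, yet your downward-redistribution mechanism only works when $\pi^{*}$ attains loss $0$ at least as often as $\psaa$. As you yourself observe, the average-cost-optimal policy may be conservative --- transmitting a narrower band even on $\ncalC$-transitions to hedge against collisions --- thereby sacrificing loss-$0$ mass; in that case $F_{\opt}(x)<F_{\mathrm{SAA}}(x)$ in a neighborhood of $x=0$, so $G(x)=\int_{-\infty}^{x}[F_{\opt}(t)-F_{\mathrm{SAA}}(t)]\,dt$ dips negative precisely where the SSD condition must hold, and the mean comparison from part 1 cannot repair this. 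Rescuing the argument with the near-deterministic hypothesis is not available to you, since it is not a hypothesis of the theorem; as written, your part 2 establishes SSD only for a subclass of channels. For completeness: the paper's own proof never addresses the second-order claim at all --- it proves only the average-cost comparison and stops --- so your attempt goes further than the paper does, while simultaneously exposing that the stated theorem likely needs either a stronger hypothesis (near-deterministic transitions) or a weaker conclusion. The secondary issues you flag (ties in the Bellman optimum, and convergence of the time-averaged loss law of $\pts$ to a stationary law rather than mere convergence of its mean) are real and are likewise not handled by the paper.
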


\begin{proof}
	See Appendix
\end{proof}

\begin{remark}
	For time-varying Markov interference channels that are quickly varying, namely, when the diagonal elements of $P$ are close to zero,
\end{remark}

\begin{figure}
	\centering
	\includegraphics[scale=0.5]{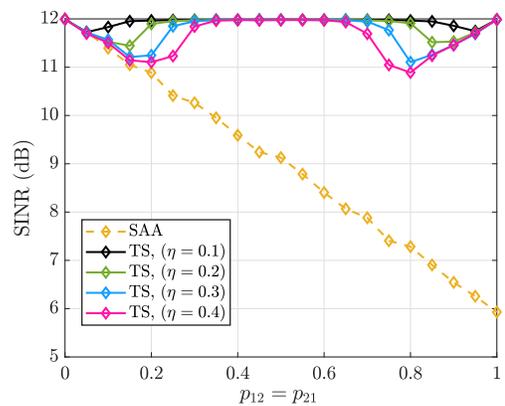}
	\caption{Average instantaneous SINR as transition probabilities $p_{21}$ and $p_{21}$ are jointly varied. As the channel becomes increasingly dynamic, the SINR under the SAA policy drops and reliability of target detection is diminished.}
	\label{fig:sinr}
\end{figure}

\begin{figure}
	\centering
	\includegraphics[scale=0.5]{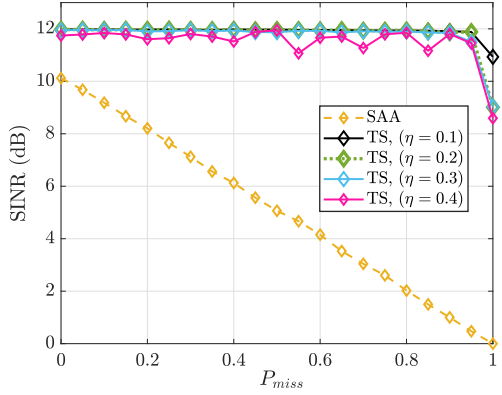}
	\caption{Impact of missed interference detections on average SINR. TS policies are generally robust to incorrect observations, while the SAA policy quickly degrades in performance with increasing $P_{miss}$.}
	\label{fig:miss}
\end{figure}

\begin{figure}
	\includegraphics[scale=0.5]{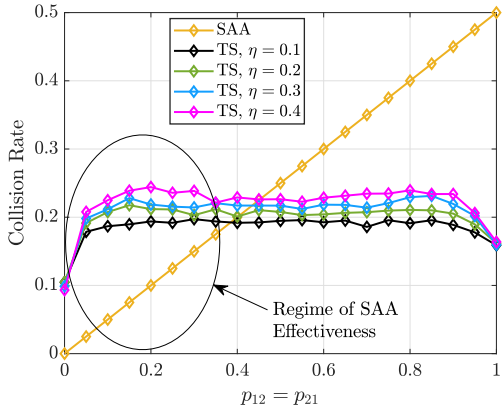}
	\caption{Collision rate of SAA and TS algorithms when time horizon is limited to $n=300$. We observe the regime where SAA is dominant.}
	\label{fig:lown}
\end{figure}

\section{Numerical Results}
\label{se:numer}
In these simulations, we examine the performance of the Thompson Sampling based waveform selection policy with an uninformative Gaussian prior. We consider a simple Markov interference channel with 2 states, having stationary transition probability matrix $P = [p_{11},p_{12};p_{21},p_{22}]$. The initial state $s_{0}$ is selected randomly.

In Figure \ref{fig:firstCase}, we observe the collision and missed opportunity rate over a time horizon of $n = 1e4$ as transition probability $p_{12}$ is varied. We observe that for low values of $p_{12}$ SAA and TS perform nearly equivalently in terms of collision rate. However, SAA provides an advantage in terms of reduced number of missed opportunities. We note that an inflection occurs around $p_{12} = 0.3$. As the channel becomes less stationary, the TS policy learns to avoid collisions effectively by avoiding certain actions altogether. Equivalently, the band is utilized more often, and the number of opportunities missed by the avoidant scheme of TS decreases. Thus, we observe that as the interference becomes more time-varying, TS performs much better than SAA, as expected.

In Figure \ref{fig:joint}, we observe the performance of each scheme as transition probabilities $p_{12}$ and $p_{21}$ are jointly varied. Values of $p_{12} = p_{21}$ near zero indicate a nearly stationary channel, while values $p_{12} = p_{21} \approx 1$ indicate rapidly fluctuating interference that changes deterministically. We note the performance can be segmented into three regions. For $p_{12} = p_{21} \leq 0.2$, the performance of TS closely mirrors SAA in terms of collisions, while incuring a slightly higher missed opportunity rate. In the regime of $0.2 < p_{12} = p_{21} < 0.6$, the TS approach effectively mitigates collisions at the expense of a missed opportunity rate of approximately $0.5$. In the final regime, $p_{12} = p_{21} > 0.6$ we note that the TS strategy outperforms SAA in terms of both missed opportunities and collisions. This is because the interference transitions become closer to deterministic as the transition probabilities approach $1$.

In Figure \ref{fig:sinr}, we see the received SINR\footnote{In the experiments, SINR is approximated using the well-known radar range equation, assuming an $INR$ of 14dB.} of these approaches for the case of jointly varied transition probabilities. We observe that for all values of $p_{12} = p_{21}$, the TS waveform selection strategy effectively maintains a high SINR, while SAA begins to perform poorly as the channel becomes increasingly time-varying. For rapidly varying channels, it can be concluded that SAA is untenable to maintain a sufficiently high probability of detection. Additionally, as the interference behavior becomes closer to deterministically varying, the TS approach is able to effectively predict transitions, and approaches missed opportunity and collision rates nearing zero.

Figure \ref{fig:miss} shows the performance degradation of SAA when the assumption of perfect interference observation is lifted. In this experiment, the radar fails to observe the interference state $s_{t}$ with probability $P_{\text{miss}}$ at each time slot, and instead observes a vector of all zeros. Once again, the time horizon is $n = 1e4$ and the transition behavior is held constant at $p_{12} = p_{21} = 0.3$. We observe that while TS is very robust to imperfect observations, the performance of SAA quickly degrades, as expected.

To highlight the regime where SAA is effective, we examine a drastically reduced time horizon, $n = 300$, in Figure \ref{fig:lown}. We observe that due to the learning time required by TS, the approach performs significantly worse than in the case of a longer time horizon, while the performance of SAA remains unchanged. However, we note that even in the case of a short time horizon, when $p_{12} = p_{21} > 0.4$, TS performs better than SAA in terms of collision rate.

From these results, we observe that generally, the learned policies are more robust to quickly-varying interference behavior and imperfect interference observations than SAA. On the other hand, SAA performs decidedly better than SAA when the time horizon is short, interference observation probability is high, and the channel is varying relatively slowly. Thus, it is important for the system designer considering the use of a learning-based algorithm to take into account the expected time horizon of the learning process as well as state observation probability. In general, fixed decision rules will break down when fundamental assumptions are violated, while learning-based approaches may be more robust. We note that for many radar applications, such as target tracking and electronic warfare, the state observation probability may be low, suggesting the value of learning may be very high for applications in which specific domain knowledge is not available \emph{a priori}.

\section{Conclusions, Insights, and Open Problems}
\label{se:concl}
In this contribution, we have sought to answer a basic question about the value of learning-based cognitive radar by examining the performance of a Thompson Sampling-based dynamic spectrum access strategy as compared to a rule-based strategy in a Markov interference channel. We showed that the performance of any policy, or decision function, can be analyzed by decomposing the set of channel transitions into three subsets which characterize collision, missed opportunity, and successful transition events. We argue that a similar procedure can be generalized to compute the expected performance of policies in $K^{\text{th}}$ order Markov channels for various applications, as in \cite{thornton2022universal}. For stationary channel dynamics, the performance of any ML strategy can be decomposed into the performance during the exploration period and performance during exploitation.

Our numerical results demonstrate that given a time horizon of $10^{4}$ pulse repetition intervals, Thompson Sampling performs at least as well as the sense-and-avoid policy in terms of average SINR. However, for \emph{nearly stationary} interference channels over a more limited time horizon, sense-and-avoid outperforms TS by utilizing additional bandwidth while maintaining a nearly equivalent SINR. For Markov interference channels which are very dynamic, namely which have diagonal transition elements close to zero, TS provides large performance benefits, and generally dominates SAA in the first-order stochastic sense. Particularly, for channels which are very dynamic, but are close to deterministic, TS provides the largest performance benefits compared to SAA.

Online learning theory tells us that the most important factors which influence the regret of algorithms are the number of actions, required dimensionality of the context vector, and the time horizon itself. Thus, systems designers planning to employ learning-based strategies should account for the allowable time in which learning may occur, limit the waveform catalog to a manageable size, and represent the environment as efficiently as possible.

We have also seen that the observability of the scene's state plays a large role in the value of any determinsitic waveform selection strategy. Machine learning algorithms are expected to provide large benefits over rule-based strategies if the observations are highly variable, which is the case when target measurements are incorporated into the model.

Future work will attempt to generalize this methodology to provide insights as to which environmental conditions make cognitive radar beneficial. For example, a more nuanced comparison involving more sophisticated fixed-rule based strategies would give greater insight into the practical value of cognitive radars. However, it is expected that next-generation cognitive radar systems will use a mixture of sequential decision making and rule-based strategies depending on the expected characteristics of the scene. Future work could examine using the definitions of stochastic dominance examined here to develop a principled algorithm selection scheme for a meta-cognitive radar.

\appendix

\begin{proof}[Proof of Theorem 1]
	Assuming perfect observability of the interference state (namely $o_{t} = s_{t} \quad \forall t$), the regret of $\psaa$ can be decomposed into two terms, the number of interference state transitions experienced, given by $R \in \nbbN$, and the sub-optimality gap $\Delta$ associated with each transition. Without loss of generality, let $R > 0$, $\Delta > 0$ be constants. Then it follows that SAA incurs regret linearly in $T$. The sub-linearity of TS regret has been well-established given a fixed number of arms $K$ and context dimension $d$, and is order $\tilde{\mathcal{O}}(d^{3/2}\sqrt{n})$. Thus, for any non-stationary Markov channel (namely, when some diagonal elements of $P$ are less than $1$), TS weakly dominates SAA in the asymptote. The extension to cases of non-perfect interference observability directly follow. However, in the finite horizon regime, precise specification of $K$, $T$, and $P$ are necessary to show stochastic dominance. It should be noted that for slowly varying channels, large values of $K$ and small $T$, SAA is more appealing.
\end{proof}

\bibliographystyle{IEEEtran}
\bibliography{smithBib.bib}{}

\end{document}